\newtheorem{thm}{Theorem}
\newtheorem{defi}{Definition}
\newtheorem{cor}{Corollary}
\newcommand{\Observations}{\mathcal{Y}}
\newcommand{\reals}{\ensuremath{\mathbb{R}}}
\newcommand{\denselist}{\itemsep 0pt\topsep-6pt\partopsep-6pt}
\newcommand{\hypothesis}[0]{\ensuremath{h}}
\newcommand{\Hypotheses}{\mathcal{H}}
\newcommand{\hstar}{\hypothesis^*}
\newcommand{\hinit}{{\hypothesis^0}}
\newcommand{\Examples}{\mathcal{Z}}
\newcommand{\Instances}{\mathcal{X}}
\newcommand{\Clabels}{\mathcal{Y}}
\newcommand{\examples}{Z}
\renewcommand{\example}{{z}}
\newcommand{\instance}{{x}}
\newcommand{\clabel}{{y}}
\newcommand{\ordering}{\sigma}
\newcommand{\orderingof}[2]{\ordering({#1};{#2})}
\title{\LARGE \bf
Barrier Certificates for Assured Machine Teaching
}
\author{Mohamadreza Ahmadi$^{1}$, Bo Wu$^{2}$, Yuxin Chen$^{1}$, Yisong Yue$^{1}$, and Ufuk Topcu$^{2}$% <-this % stops a space
\thanks{This work was supported by AFOSR FA9550-19-1-0005, DARPA D19AP00004, NSF 1646522 and NSF 1652113.}% <-this % stops a space
\thanks{$^{2}$Institute for Computational Engineering and Sciences (ICES), University of Texas at Austin, Peter O'Donnel Jr. Building, 201 E 24th St. Austin, TX 78712, USA.
        {\tt\small \{bwu3,utopcu\}@utexas.edu}}%
\thanks{$^{1}$ California Institute of Technology 
1200 E. California Blvd., MC 104-44, Pasadena, CA 91125, USA.
        {\tt\small  \{mrahmadi,chenyux,yyue\}@caltech.edu }}%
}
\begin{document}

\maketitle
\thispagestyle{empty}
\pagestyle{empty}

%%%%%%%%%%%%%%%%%%%%%%%%%%%%%%%%%%%%%%%%%%%%%%%%%%%%%%%%%%%%%%%%%%%%%%%%%%%%%%%%
\begin{abstract}
Machine teaching can be viewed as optimal control for learning. Given a learner's model, machine teaching aims to determine the optimal training data to steer the learner towards a target hypothesis. In this paper, we are interested in providing assurances for machine teaching algorithms using control theory. In particular, we study a well-established learner's model in the machine teaching literature that is captured by the local preference over a version space.  We interpret the problem of teaching a preference-based learner as solving a partially observable Markov decision process (POMDP). We then show that the POMDP formulation can be cast as a special hybrid system, i.e., a discrete-time switched system. Subsequently, we use barrier certificates to verify set-theoric properties 
of this special hybrid system. We show how the computation of the barrier certificate can be decomposed and numerically implemented as the solution to a sum-of-squares (SOS) program. For illustration, we show how the proposed framework based on control theory can be used to verify the teaching performance of two well-known machine teaching methods. 
\end{abstract}

%%%%%%%%%%%%%%%%%%%%%%%%%%%%%%%%%%%%%%%%%%%%%%%%%%%%%%%%%%%%%%%%%%%%%%%%%%%%%%%%
\section{Introduction}

%In recent years, the need for machine learning systems has far exceeded the supply of machine learning experts.  Machine teaching, that is, algorithms designed to enable machines to teach other machines or humans, have thus received attention~~\cite{simard2017machine}. 
From an optimal control perspective, a learning system (e.g., a machine learning algorithm, or a human learner) defines a dynamical system where the state (i.e., learner's hypothesis) is driven by training data \cite{lessard2018optimal}. In this respect, machine teaching, i.e., the algorithmic framework of designing an optimal training set for learning a target hypothesis, can be viewed as optimal control for learning \cite{zhu2015machine}. In a typical setting of machine teaching, the target hypothesis is \emph{given} to the algorithm, and the goal of the teacher (machine) is to generate a minimal sequence of training examples such that the target hypothesis can be learned by a learner (human or another machine) from a finite set of hypotheses. 

One popular learner's model studied in the machine teaching literature is the version space learner. In such settings, the learner maintains a subset of hypotheses that are consistent with the examples received from a teacher, and outputs a hypothesis from this subset. %Depending on the learner's anticipated behavior, 
Based on different assumptions on the learner's behavior, multiple variants of the version space learner model has been studied in algorithmic machine teaching, leading to different notions of teaching complexity: For instance, (i) the ``worst-case'' model \cite{goldman1995complexity} assumes that the learner's behavior is completely unpredictable, and (ii) the ``preference-based'' model \cite{gao2017preference} assumes that she has a global preference over the hypotheses. These models are typically studied under the batch setting, where the teacher constructs a set of examples and provides them to the learner at once. Recently, \cite{chen18adaptive} considered the \emph{state-dependent} preference-based model, which generalizes the preference-based model of \cite{gao2017preference} to the adaptive setting. The state-dependent preference-based model assumes that the learner's choice of next hypothesis depends on some local preferences defined by the learner's state (i.e., the current hypothesis). 
%Due to its local preference structure, the state-dependent preference-based model is particularly suitable for designing and analyzing sequential teaching policies. Under the sequential setting,
In the sequential machine teaching setting, the teacher, after showing each example, obtains feedback about the hypothesis that the learner is currently entertaining; such feedback is further utilized to guide the selection of future teaching examples. %Such feedback can be used to select future teaching examples in a more informed way, and more importantly, to guide the teaching policy by integrating the ordering of training examples into the optimization process.

 In this paper, we use notions from hybrid systems analysis framework to study the state-dependent preference-based machine teaching model with the aim of verifying whether a given machine teaching method has assured teaching performance. We first show that state-dependent preference-based machine teaching model can be represented by a POMDP. Once this POMDP is formulated, we show that the evolution of the \emph{beliefs} over the states of this POMDP can be described by a discrete-time switched system (also see~\cite{pomdp1,ACJT}).
 %(also see our recent relevant works~\cite{pomdp1,ACJT}). 
 We use barrier certificates to verify whether the beliefs of this POMDP belong to some subset of the reachable belief space, which, in turn, corresponds to the probability of teaching of a hypothesis. From a computational standpoint, we show that these barrier certificates can be decomposed and constructed using SOS programming. We demonstrate the efficacy of our proposed methodology by comparing and analyzing two  machine teaching methods. 

The rest of this paper is organized as follows. We describe the state-dependent teaching model in the next section. In Section~\ref{sec:teachpomdp}, we propose a POMDP representation for machine teaching. In Section~\ref{sec:probform}, we briefly discuss a hybrid system that describe the evolution of this POMDP. In Section~\ref{sec:barrier}, we formulate a set of conditions based on barrier certificates for verifying the teaching performance and show how the calculations  can be decomposed. In Section~\ref{sec:SOS}, we propose a computational approach using SOS programming to find the barrier certificates. We elucidate the proposed method with an example in Section~\ref{sec:example} and conclude the paper in Section~\ref{sec:conclude}.

\textbf{Notation:} $\mathds{R}$ and $\mathds{N}$ denote the sets of real numbers and non-negative integers  $\{0,1,2,\ldots\}$, respectively.  $\mathds{N}_{\ge l}$, with $l \in \mathds{N}$, denotes $\{l,l+1,l+2,\ldots\}$. $\mathcal{R}[x]$ accounts for the set of polynomial functions with real coefficients in $z \in \mathds{R}^n$, $p: \mathds{R}^n \to \mathds{R}$ and $\Sigma \subset\mathcal{R}$ is the subset of polynomials with an SOS decomposition; i.e., $p \in  \Sigma[x]$ if and only if there are $p_i \in \mathcal{R}[x],~i \in \{1, \ldots ,k\}$ such that $p = p_i^2 + \cdots +p_k^2$.

\section{The State-Dependent Teaching Model}\label{sec:model}
We now state the adaptive machine teaching protocol, and describe the state-dependent learner's model of \cite{chen18adaptive}.

%%%%%%%%%%%%%%%%%%%%%%%%%%%%%%%%%%%%%%%%%%%%%%%
%%%%%%%%%%%%%%%%%%%%%%%%%%%%%%%%%%%%%%%%%%%%%%%
\subsubsection{The Teaching Domain}
Let $\Instances$ denote a ground set of unlabeled examples, and the set $\Clabels$ denotes the possible labels that could be assigned to elements of $\Instances$. We denote by $\Hypotheses$ a finite class of hypotheses, each element $\hypothesis\in \Hypotheses$ is a function $\hypothesis: \Instances \rightarrow \Clabels$. In our model, $\Instances$, $\Hypotheses$, and $\Clabels$ are known to both the teacher and the learner.
There is a target hypothesis $\hstar\in \Hypotheses$ that is known to the teacher, but not the learner.  Let $\Examples \subseteq \Instances \times \Clabels$ be the ground set of labeled examples. Each element $\example = (\instance_\example,\clabel^*_\example) \in \Examples$ represents a labeled example, where the label is given by the target hypothesis $\hstar$, i.e., $\clabel^*_\example = \hstar(\instance_\example)$.
Here, we define the notion of \emph{version space} needed to formalize our model of the learner. Given a set of labeled examples $\examples \subseteq \Examples$, the version space induced by $\examples$ is the subset of hypotheses $\Hypotheses(\examples) \in \Hypotheses$ that are consistent with labels of all the examples, i.e., $\Hypotheses(\examples):= \{\hypothesis: \hypothesis\in \Hypotheses  \texttt{\ and\ } \forall (\instance, \clabel) \in  \examples, h(\instance) = \clabel\}$.

%%%%%%%%%%%%%%%%%%%%%%%%%%%%%%%%%%%%%%%%%%%%%%%
%%%%%%%%%%%%%%%%%%%%%%%%%%%%%%%%%%%%%%%%%%%%%%%
\subsubsection{State-dependent Preference-based Model}\label{sec:model:learner}
The preference function encodes the learner's preferences of transitioning to a hypothesis.
Consider that the learner's current hypothesis is $h$, and there are two hypotheses $h'$, $h''$ that they could possibly pick as the next hypothesis. We define the preference function as $\sigma: \Hypotheses \times \Hypotheses \rightarrow \reals_+$. Given current hypothesis $h$ and any two hypothesis $\hypothesis', \hypothesis''$, we say that $\hypothesis'$ is preferred to $\hypothesis''$ from $\hypothesis$, iff $\sigma(\hypothesis' ; \hypothesis) < \sigma(\hypothesis''; \hypothesis)$. If $\sigma(\hypothesis'; \hypothesis) = \sigma(\hypothesis''; \hypothesis)$, then the learner could pick either one of these two.

The learner starts with an initial hypothesis $\hinit\in \Hypotheses$ before receiving any labeled examples from the teacher. Then, the interaction between the teacher and the learner proceeds in discrete time steps (trials). At any trial $t$, let us denote the labeled examples received by the learner up to (but not including) time step $t$ via a set $\examples^{t}$, the learner's version space as $\Hypotheses^t=\Hypotheses(\examples^{t})$,  and the current hypothesis as $\hypothesis^t$.
At trial $t$, we model the learning dynamics as follows:
\begin{enumerate}\denselist
\item the learner receives a new labeled example, %$\example^t$; %=(\instance^t, \clabel^t)
  and
\item the learner updates the version space $\Hypotheses^{t+1}$, and picks the next hypothesis based on the current hypothesis $\hypothesis^t$, version space $\Hypotheses^{t+1}$, and the preference function $\ordering$:
%\item \yuxin{We should add that we are considering the adaptive setting---the learning protocol we consider in the experiments assumes that the teacher, at each iteration, also observes the learner's current hypothesis. This will cause the issue that the setting now reduces to MDPs instead of POMDPs. Perhaps we should add the following: ``(Optionally), the teacher observes the learner's next hypothesis $\hypothesis^{t+1}$''.}
\end{enumerate}
 \begin{align}
    \hypothesis^{t+1} \in \{\hypothesis\in \Hypotheses^{t+1} : \orderingof{\hypothesis}{\hypothesis^t} = \min_{\hypothesis'\in \Hypotheses^{t+1}} \orderingof{\hypothesis'}{\hypothesis^t}\}.
    \label{eq.learners-jump}
  \end{align}

\subsubsection{The Teaching Protocol and Objective}
The teacher's goal is to steer the learner towards the target hypothesis $\hstar$ by providing a sequence of labeled examples. At trial $t$, we consider the following teaching protocol:
\begin{enumerate}
    \item the teacher selects an unlabeled example $\instance^t \in \Instances$ and presents it to the learner;
    \item the learner makes a guess of the label, %of $\instance^t$, denoted by
    i.e. $\clabel^t := h^t(\instance^t)$.
    \item the teacher receives feedback from the learner\footnote{We consider two variants of the learner feedback: (a) the teacher indirectly observes the learner's hypothesis $h^t$ via label $\clabel^t$; (b) the teacher directly observes the learner's current hypothesis $h^t$. Our analysis in the subsequent sections applies to both scenarios. For discussion simplicity we focus on the more general setting (a) in Section III-VI.}  %assume the more general setting (a). In fact, one can view (b) as a special case of (a) when }
    % guess $h_t(\clabel^t)$, 
    and provides the true label $\hstar(\instance^t)$;
    \item the learner transitions from the current $\hypothesis^t$ to the next hypothesis $\hypothesis^{t+1}$ as per the model described in the previous subsection.
    \item Teaching finishes if the learner's updated hypothesis $\hypothesis^{t+1}=\hypothesis^*$. 
\end{enumerate}
The goal of teaching algorithms is to achieve this goal in the minimal number of time steps.

% \begin{itemize}
%     \item the teacher selects a labeled example $\example^t \in \Examples$ and the learner transitions from the current $\hypothesis^t$ to the next hypothesis $\hypothesis^{t+1}$ as per the model described above.
%     \item Teaching finishes here if the learner's updated hypothesis $\hypothesis^{t+1}=\hypothesis^*$. 
% \end{itemize}
%\yuxin{I added the following paragraph---we may need to shorten it.}
%Note the state-dependent teaching model described above generalizes many existing models of version-space learners in the algorithmic machine teaching literature. For example, if the preference function is constant, i.e., $\forall \hypothesis,\hypothesis'\in \hypotheses$, $\sigma(\hypothesis' ; \hypothesis) \equiv \text{const}$, then it reduces to the worst-case model of \cite{goldman1995complexity}; if the preference function does not depend on the current hypothesis i.e., $\forall \hypothesis, \hypothesis',\hypothesis''\in \hypotheses$, $\sigma(\hypothesis' ; \hypothesis) = \sigma(\hypothesis' ; \hypothesis'')$, then it reduces to the preference-based model of \cite{gao2017preference}. 

The state-dependent teaching model is also found to be consistent with % with proper choices of the preference function $\sigma$, such model is consistent with
simple human learning models in cognitive science, including the ``win-stay lose-shift'' model \cite{bonawitz2014win,rafferty2016faster} (e.g., when $\sigma(\hypothesis'; \hypothesis) = 0$ if $\hypothesis= \hypothesis'$ and~$1$ otherwise, the learner prefers to stay at the same hypothesis if it is consistent with the observed data).

% {\color{blue}  YUXIN: A REMARK ON WHY THE PREFERENCE FUNCTION MODEL MAKES SENSE? ANY HEURISTICS?}

\section{POMDP Model for Machine Teaching}\label{sec:teachpomdp}

Given the state-dependent teaching model as described in Section~\ref{sec:model}, we can represent machine teaching as a sequential decision making under uncertainty scenario. To this end, we propose a POMDP representation for the learner based on the state-dependent teaching model. The POMDP model can be described as follows.

%\yuxin{This setting doesn't fit exactly into the version space model we consider in the reference. The version space model is actually an MDP instead of a POMDP, i.e., we assume that the teacher observes the learner's state $h_{t}$ at each time step $t$. In such cases, $p_0$ is a delta distribution, }
\begin{defi}[Learning POMDP]
  The  learning POMDP $\mathcal{P}_L$ is a tuple %$(\mathcal{H}, p_0, \mathcal{Z}, T, \mathcal{Y}, O)$
  $(\mathcal{H}, p_0, \mathcal{Z}, T, \Observations, O)$
  
  %\yuxin{$\mathcal{Y}$ is fully known to the teacher, because the teacher knows $\hstar$. The way that the current POMDP is described looks like the model of an (active) learning system, rather than a machine teaching model. The current model is fine for the non-adaptive teaching setting. We will have to change the description $b_t$ to be the belief of the teacher (as I commented below in the next paragraph). For the adaptive setting, the teacher's belief on the learner's hypothesis also depends on the learner's feedback---I think one possible fix is to enrich the observation set $\mathcal{Y}$, such that it also captures the feedback that the teacher receives from the learner.}, where
\begin{itemize}
\item the hypotheses set $\mathcal{H}$ is a finite set of hidden states;
\item $p_0$ is the probability of having an initial hypothesis $h_0 \in \mathcal{H}$;
\item the set of labelled examples $\mathcal{Z}$  constitute the finite set of actions;
\item $T$ describes the transitions from one hypothesis (state) to another characterized by the preference functions  as given by~\eqref{eq.learners-jump};
\item $\Observations$ denotes the set of observations made by the teacher.
\item $O(y_t \mid h_t,z_t)$ is determined by the current hypothesis function.
\end{itemize}
\end{defi}

Here, the observation model $O(y_t \mid h_t, z_t)$ defines how the version space gets updated. When referring to the ``version space'' learners, we are implicitly considering the ``noise-free'' setting, i.e., all consistent hypotheses are uniformly distributed, or equivalently, $O(y_t \mid h_t, z_t)$ is binary. Moreover, according to~(\ref{eq.learners-jump}), the transition function $T(h, z_{t-1}, h')$ defines a \emph{uniform} distribution: the learner only goes to the hypotheses $h'$ that are the most preferred; hence, $T$ induces a uniform distribution over the most preferred hypothesis according to the preference function $\sigma$.

%\yuxin{The description below fits the active learning setting, but not teaching. I think it is important that the algorithm takes the perspective of the teacher instead of the learner}
The learner starts with an initial hypothesis $h_0$ and over a sequence of trials, in which an example $z_t\in \mathcal{Z}$ is shown and the learner receives a corresponding observation $y_t \in \mathcal{Y}$, develops a belief in the new hypothesis $h$.
% \yuxin{see comments in the next sentence}. In other words, the learner updates its belief in the hypothesis $b_t(h)$ based on the examples $z\in \mathcal{Z}$ and the observations $y\in \mathcal{Y}$ it receives over time. 
%\yuxin{perhaps it makes more sense if we say ``the teacher develops a belief of which state (i.e., hypothesis) the learner is in''; and ``the teacher updates its belief $b_t(h)$ in the learner's current hypothesis $h$, based on the examples...''}
Then, the hypothesis belief evolves according to
\begin{multline}\label{eq:beliefstPOMDP}
b_t(h')=\\ \frac{O(y_{t}\mid h',z_{t-1})\sum_{h\in \mathcal{H}}T(h,z_{t-1},h')b_{t-1}(h)}{\sum_{h'\in \mathcal{H}}O(y_{t} \mid h',z_{t-1})\sum_{h\in \mathcal{H}}T(h,z_{t-1},h')b_{t-1}(h)},
\end{multline}
%\yuxin{in the numerator, subscript of the sum: should $q\in Q$ be $h\in \mathcal{H}$?}

The objective of a teaching policy is then to assure that the learner learns the target hypothesis $h^* \in \mathcal{H}$ in $t^*$ number of trials. That is, 
\begin{equation}\label{eq:teachperform}
b_{t^*}(h^*) \ge \lambda,
\end{equation}
where we refer to $0<\lambda\le1$ as the \emph{teaching performance}. In addition, given a teaching policy, we are often interested in finding the minimum number of trials such that the learner learns a target hypothesis, i.e., 
\begin{eqnarray}\label{eq:teachperform2}
\min~t^*~\text{subject to}~b_{t^*}(h^*) \ge \lambda.
\end{eqnarray}

Ideally, given a pre-specified number of trials ${t^*}$, a teaching algorithm is \emph{perfect},  if $\lambda = 1$, i.e., the probability of learning the target hypothesis after ${t^*}$ number of examples is one. However, achieving a perfect teaching algorithm in $t^{*}$ number of trials may not be realistic. In practice, it is desirable that we teach the target hypothesis with teaching performance $\lambda \ge 0.75$.

%%%%%%%%%%%%%%%%%%%%%%%%%%%%%%%%%%%%%%%%%%%%%%%%%%%%%%%%%%%%%%%%%%%%%%%%%%%%%%%%

%We denote by $\mathcal{C}^m(X)$, with $z \subseteq \mathds{R}^n$, the space of $m$-times continuously differentiable functions and by  $\partial^m = \frac{\partial^m}{\partial x^m}$ the derivatives up to order $m$. 

%\section{Preliminaries}

%\subsection{Partially Observable Markov Decision Processes}

\section{Belief Evolution as a Hybrid System}\label{sec:probform}

Checking whether~\eqref{eq:teachperform} holds by solving the learning POMDP directly is a PSPACE-hard problem~\cite{ChatterjeeCT16}.  In this section, we show that the learning POMDP can be represented as a special hybrid system~\cite{goebel2009hybrid}, specifically, a discrete-time switched system~\cite{ahmadi2008non,KUNDU2017191,zhang2009exponential}. 

The belief update equation~\eqref{eq:beliefstPOMDP}  can be characterized as a discrete-time switched system, where the actions $a \in A$ define the switching modes.  Formally, the hypothesis belief \emph{dynamics}~\eqref{eq:beliefstPOMDP} can be described as
\begin{equation}\label{eq:beliefstPOMDPdyn}
b_t = f_z\left(b_{t-1},y_t\right),
\end{equation}
where $b$ denote the belief vector belonging to the belief unit simplex $\mathcal{B}$ and $b_0 =p_0$. In~\eqref{eq:beliefstPOMDPdyn}, $z \in \mathcal{Z}$ denote the examples that can be interpreted as the indices for the switching modes, $y \in \mathcal{Y}$ are the observations representing inputs, and $t \in \mathds{N}_{\ge 1}$ denote the discrete time instances. The (rational) vector fields $\{f_{z}\}_{z \in \mathcal{Z}}$ with $f_z: [0,1]^{|\mathcal{Z}|} \times \mathcal{Y} \to [0,1]^{|\mathcal{Z}|} $ are described as the vectors with rows
$$
f_z^{h'}(b,y) = \frac{O(y \mid h',z)\sum_{h\in \mathcal{H}}T(h,z,h')b_{t-1}(h)}{\sum_{h'\in \mathcal{H}}O(y \mid h',z)\sum_{h\in \mathcal{H}}T(h,z,h')b_{t-1}(h)},
$$
where $f_z^{h'}$ denotes the $h'$th row of $f_z$.  

%If the transition probabilities are uncertain, i.e., they belong to some given set, the system can be represented as an uncertain discrete-time switched system
%\begin{equation}\label{equation:discretesystem}
%b_t = f_a\left(b_{t-1},\theta,z_t\right),
%\end{equation}
%where $\theta \in \Theta$ is a set of uncertain parameters and $\Theta$ represents the uncertain transition probability intervals \eqref{eq:uncertaintpdf}. That is, 
%\begin{equation*}
%		\theta_{q,a,q'} = T(q,a,q') \in [\underline{l}_{q,a,q'},\overline{l}_{q,a,q'}],~ (q,a,q') \in T_u,
%\end{equation*}
%and
%\begin{equation}\label{eq:uncertainparams}
%\Theta =\left \{ \theta \mid \theta_{q,a,q'}  \in  [\underline{l}_{q,a,q'},\overline{l}_{q,a,q'}],~ (q,a,q') \in T_u \right \}.
%\end{equation}

We consider two classes of problems in learning POMDP verification:
\begin{itemize}
\item [1.] \textit{Arbitrary-Policy Verification}: This case corresponds to analyzing~\eqref{eq:beliefstPOMDPdyn} under \emph{arbitrary switching} with switching modes determined by the examples $z \in \mathcal{Z}$.
\item [2.] \textit{Fixed-Policy Verification}: This corresponds to analyzing~\eqref{eq:beliefstPOMDPdyn} under \emph{state-dependent switching}. In fact, a teaching policy $\pi:\mathcal{B} \to \mathcal{Z}$ (a mapping from the hypothesis beliefs into examples) determines regions in the belief space where each mode (example) is active.
\end{itemize}

Both cases of switched systems with~\emph{arbitrary switching} and~\emph{state-dependent switching} are well-known in the systems and controls literature~(see~\cite{liberzon2003switching,hespanha2004uniform} and references therein). 

%\begin{example}Consider a POMDP with two states $\{q_1,q_2\}$, two actions $\{a_1,a_2\}$, and $z \in {Z}$. The policy
%\begin{equation} \label{eq:example-sds}
%\pi=
%\left\{\begin{array}{lr}
%        a_1, & b \in \mathcal{B}_1,\\
%        a_2, & b \in \mathcal{B}_2\\
%        \end{array}\right.
%\end{equation}
%leads to two different switching modes based on whether the states belong to the regions $\mathcal{B}_1$ or $\mathcal{B}_2$ (see Figure~\ref{figure1}). That is, the belief update equation~\eqref{equation:discretesystem1} is given by 
%\begin{equation} \label{eq:example-dynamics}
%b_t=
%\left\{\begin{array}{lr}
%        f_{a_1}\left(b_{t-1},z_t\right), & b \in \mathcal{B}_1,\\
%        f_{a_2}\left(b_{t-1},z_t\right), & b \in \mathcal{B}_2.\\
%        \end{array}\right.
%\end{equation}
%Note that the belief space is given by $\mathcal{B}=\mathcal{B}_1 \cup \mathcal{B}_2 = \{ b \mid b(q_1)+b(q_2)=1\}$.
%\end{example}
%\begin{figure}[tbp] 
%\begin{center} 
%\includegraphics[width=5cm]{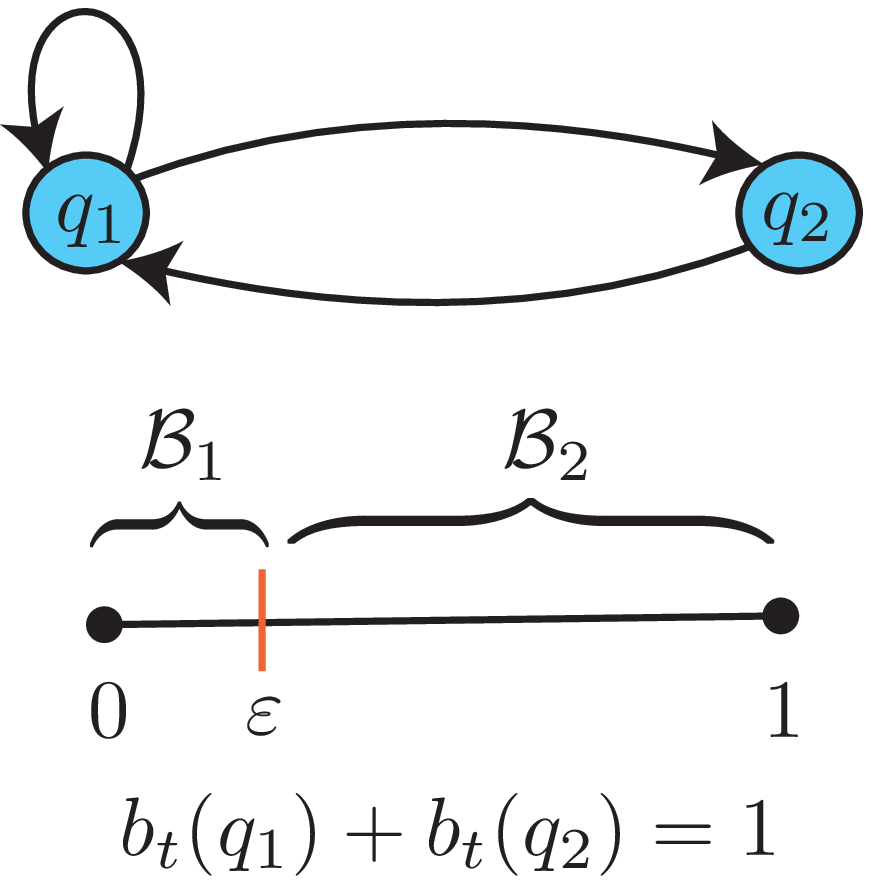}
%\vspace{-1cm}
%\caption{An example of a POMDP with two states and the state-dependent switching modes induced by the policy~\eqref{eq:example-sds}. }
%\label{figure1}
%\end{center}
%\end{figure}

\section{Verifying Teaching Performance Using Barrier Certificates}\label{sec:barrier}

In the following, we describe a method based on barrier certificates to verify the teaching performance as given by~\eqref{eq:teachperform}. We then focus on the two cases of arbitrary policy verification and fixed-policy verification. We further show that in both cases, the calculation of the barrier certificates can be decomposed. 

In order to check the teaching performance, we consider following teaching-failure set
\begin{equation}\label{eq:unsafeset}
\mathcal{B}_f = \{ b \in \mathcal{B} \mid b_{t^*}(h^*) < \lambda \},
\end{equation}
which is the complement of~\eqref{eq:teachperform}.

We  have the following result.

\begin{thm}\label{theorem-barrier-discrete}
Given the learning POMDP~ $(\mathcal{H}, p_0, \mathcal{Z}, T, \mathcal{Y}, O)$, a target hypothesis $h^* \in \mathcal{H}$, and a teaching performance $\lambda$, and a pre-set number of trials $t^*$, if there exists a function $B:\mathds{N} \times \mathcal{B} \to \mathds{R}$ called the barrier certificate such that
\begin{equation}\label{equation:barrier-condition1}
B(t^*,b_{t^*})  > 0, \quad \forall b_{t^*} \in \mathcal{B}_f,
\end{equation}
with $\mathcal{B}_f$ as described in~\eqref{eq:unsafeset},
\begin{equation}\label{equation:barrier-condition11}
 B(0,b_0) < 0, \quad \text{for} \quad b_0=p_0,
\end{equation}
and
\begin{multline}\label{equation:barrier-condition2}
B\left(t,f_z(b_{t-1},y)\right) - B(t-1,b_{t-1}) \le 0, \\ \forall t \in \{1,2,\ldots,t^*\},~ \forall z \in \mathcal{Z},~\forall y \in \mathcal{Y}, ~\forall b \in \mathcal{B},
\end{multline}
then there the teaching performance $\lambda$ is satisfied, i.e., inequality~\eqref{eq:teachperform} holds.
\end{thm}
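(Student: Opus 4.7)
The plan is a standard Lyapunov-style (discrete-time) barrier argument by contradiction: show that condition~(\ref{equation:barrier-condition2}) makes $B$ non-increasing along every admissible belief trajectory, then use conditions~(\ref{equation:barrier-condition11}) and~(\ref{equation:barrier-condition1}) to argue that the initial negativity of $B$ is incompatible with reaching the failure set $\mathcal{B}_f$ at time $t^*$.

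Concretely, I would assume for contradiction that the teaching-performance bound~(\ref{eq:teachperform}) fails, i.e., that there exist a switching sequence $z_1,\dots,z_{t^*} \in \mathcal{Z}$ and an observation sequence $y_1,\dots,y_{t^*} \in \mathcal{Y}$ such that the trajectory $\{b_t\}_{t=0}^{t^*}$ generated by $b_0 = p_0$ and $b_t = f_{z_t}(b_{t-1}, y_t)$ satisfies $b_{t^*} \in \mathcal{B}_f$. The bulk of the argument is a one-line induction on $t$: the base case $B(0,b_0) = B(0,p_0) < 0$ is exactly~(\ref{equation:barrier-condition11}), and for the inductive step, since $b_t = f_{z_t}(b_{t-1},y_t)$ with $z_t \in \mathcal{Z}$, $y_t \in \mathcal{Y}$, $b_{t-1} \in \mathcal{B}$, the universally quantified inequality~(\ref{equation:barrier-condition2}) yields
\begin{equation*}
B(t,b_t) \;=\; B\bigl(t,f_{z_t}(b_{t-1},y_t)\bigr) \;\le\; B(t-1,b_{t-1}) \;\le\; B(0,b_0) \;<\; 0.
\end{equation*}

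Applied at $t=t^*$ this gives $B(t^*,b_{t^*}) < 0$, which directly contradicts~(\ref{equation:barrier-condition1}) since $b_{t^*} \in \mathcal{B}_f$ forces $B(t^*,b_{t^*}) > 0$. Hence no reachable belief at time $t^*$ can lie in $\mathcal{B}_f$, and by the definition~(\ref{eq:unsafeset}) of $\mathcal{B}_f$ this is exactly the statement $b_{t^*}(h^*) \ge \lambda$.

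There is really no hard step: the universal quantifiers in~(\ref{equation:barrier-condition2}) over $z$ and $y$ are what make the argument cover both the arbitrary-policy and fixed-policy verification cases (the policy only restricts the admissible $z_t$, which is already a subset of $\mathcal{Z}$), as well as every realization of the noisy observation. The only conceptual point worth flagging is that monotonicity of $B$ along trajectories is a \emph{worst-case} guarantee — it is what lets the barrier certificate serve as a certificate of teaching performance for every admissible execution rather than only in expectation, which is precisely the property needed for the subsequent SOS-based decomposition in Section~\ref{sec:SOS}.
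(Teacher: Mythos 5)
Your proposal is correct and follows essentially the same route as the paper's own proof: a contradiction argument in which condition~(\ref{equation:barrier-condition2}) makes $B$ non-increasing along any admissible belief trajectory, so that $B(t^*,b_{t^*}) \le B(0,b_0) < 0$ by~(\ref{equation:barrier-condition11}), contradicting the positivity required by~(\ref{equation:barrier-condition1}) on $\mathcal{B}_f$. The only difference is presentational — you make the induction over $t$ and the quantification over the switching and observation sequences explicit, which the paper leaves implicit.
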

\begin{proof}
The proof is carried out by contradiction. Assume at trial ${t^*}$, the teaching performance is not satisfied. Thus, there is a solution to the hypothesis belief update equation~\eqref{eq:beliefstPOMDPdyn}  with $b_0=p_0$ such that  $b_{t^*}(h^*) < \lambda$. From inequality~\eqref{equation:barrier-condition2}, we have 
$$
B(t,b_t) \le B(t-1,b_{t-1})
$$
for all $t\in \{1,2,\ldots,{t^*}\}$ and all examples $z \in \mathcal{Z}$. Hence, $B(t,b_t) \le B(0,b_0)$ for all $t \in \{1,2,\ldots,{t^*}\}$. Furthermore, inequality~\eqref{equation:barrier-condition11} implies that 
$$
B(0,b_0) < 0 
$$
for $b_0 =p_0$.  Since the choice of ${t^*}$ can be arbitrary, this is a contradiction because it implies that $B({t^*},b_{t^*}) \le B(0,b_0) < 0$. Therefore, there exist no solution of \eqref{eq:beliefstPOMDPdyn} such that $b_0=p_0$ and $b_{t^*} \in \mathcal{B}_f$ for any sequence of examples $z \in \mathcal{Z}$. Hence, the teaching performance  is satisfied.
\end{proof}

In practice, we may have a large number of examples. Then, finding a barrier certificate that satisfies the conditions of Theorem~\ref{theorem-barrier-discrete} becomes prohibitive to compute. In the next result, we show how the calculation of the barrier certificate can be decomposed into finding a set of barrier certificates for each example and then taking the convex hull of them.

\begin{thm}\label{theorem-barrier-convexhull}
Given the learning POMDP~ $(\mathcal{H}, p_0, \mathcal{Z}, T, \mathcal{Y}, O)$, a target hypothesis $h^* \in \mathcal{H}$, a teaching performance $\lambda$, and a pre-set number of trials $t^*$, if there exists a set of function $B_z:\mathds{N} \times \mathcal{B} \to \mathds{R}$, $z \in \mathcal{Z}$, such that
\begin{equation}\label{equation:barrier-condition1x}
B_z(t^*,b_{t^*})  > 0, \quad \forall b_{t^*} \in \mathcal{B}_f,~~\forall z \in \mathcal{Z},
\end{equation}
with $\mathcal{B}_f$ as described in~\eqref{eq:unsafeset},
\begin{equation}\label{equation:barrier-condition11x}
 B_z(0,b_0) < 0, \quad \text{for} \quad b_0=p_0,~~\forall z \in \mathcal{Z},
\end{equation}
and
\begin{multline}\label{equation:barrier-condition2x}
B_z\left(t,f_z(b_{t-1},y)\right) - B_z(t-1,b_{t-1}) \le 0, \\ \forall t \in \{1,2,\ldots,t^*\},~ \forall z \in \mathcal{Z},~\forall y \in \mathcal{Y}, ~\forall b \in \mathcal{B},
\end{multline}
then there the teaching performance $\lambda$ is satisfied, i.e., inequality~\eqref{eq:teachperform} holds. Furthermore, the overall barrier certificate is given by $B = \text{co}\{B_z\}_{x \in \mathcal{Z}}$.
\end{thm}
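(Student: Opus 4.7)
The plan is to mimic the contradiction argument used in the proof of Theorem~\ref{theorem-barrier-discrete}, applying it to each per-example barrier $B_z$ in turn, and then to verify that the claimed convex hull inherits the three hypotheses of Theorem~\ref{theorem-barrier-discrete} so that the earlier result can be invoked on the combined certificate. Concretely, I would fix an admissible belief trajectory $b_0 = p_0, b_1, \ldots, b_{t^*}$ and suppose toward a contradiction that $b_{t^*} \in \mathcal{B}_f$. Condition~\eqref{equation:barrier-condition1x} yields $B_z(t^*, b_{t^*}) > 0$ for every $z \in \mathcal{Z}$, while~\eqref{equation:barrier-condition11x} yields $B_z(0, p_0) < 0$. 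Telescoping the descent inequality~\eqref{equation:barrier-condition2x} along the trajectory then forces $B_z(t^*, b_{t^*}) \le B_z(0, p_0) < 0$ for each $z$, contradicting the terminal positivity and establishing~\eqref{eq:teachperform}.

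For the ``furthermore'' assertion I would take $B(t,b) := \sum_{z \in \mathcal{Z}} \alpha_z B_z(t,b)$ with weights $\alpha_z > 0$, $\sum_z \alpha_z = 1$, and verify conditions~\eqref{equation:barrier-condition1}--\eqref{equation:barrier-condition2} of Theorem~\ref{theorem-barrier-discrete} one at a time. Positivity of $B(t^*, \cdot)$ on $\mathcal{B}_f$ and negativity of $B(0, p_0)$ transfer immediately from the pointwise sign of each $B_z$, since convex combinations with positive weights preserve strict sign. The descent condition follows from linearity of the increment: each summand $\alpha_{z'}\bigl[B_{z'}(t, f_z(b,y)) - B_{z'}(t-1, b)\bigr]$ is nonpositive by~\eqref{equation:barrier-condition2x}, hence so is the total, and Theorem~\ref{theorem-barrier-discrete} applies to $B$.

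The main obstacle I expect is the telescoping step, because~\eqref{equation:barrier-condition2x} couples the barrier index and the flow index through the shared symbol $z$. If the intended reading is that $B_z$ is required to descend under every flow $f_z$ for $z \in \mathcal{Z}$, the telescoping along any admissible trajectory is transparent and the convex-combination step in the previous paragraph closes the argument cleanly. If instead $B_z$ only descends under its own flow, then the chain along a trajectory visiting several modes must be stitched together using a time-varying selection that, at each step, picks the $B_z$ whose index matches the active example, together with a consistency argument across switches; in that case the convex-hull object $\mathrm{co}\{B_z\}$ is to be read as the upper/lower envelope generated by these mode-indexed certificates rather than as a fixed convex combination. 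Either way, the convex-hull construction is precisely the device that collapses the many mode-specific certificates into a single object governed by Theorem~\ref{theorem-barrier-discrete}.
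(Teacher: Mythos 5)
The paper itself omits the proof of Theorem~\ref{theorem-barrier-convexhull} and defers to an extended version, so there is no in-paper argument to compare yours against line by line; judged on its own terms, your proposal has a genuine gap, and it sits exactly where you yourself flag it. Condition~\eqref{equation:barrier-condition2x} binds a single $z$ that appears both as the subscript of $B_z$ and of $f_z$, so it only controls the diagonal increments $B_z(t,f_z(b,y))-B_z(t-1,b)$. Your telescoping in the first paragraph therefore only closes for a trajectory that uses the same example at every trial; along a trajectory that switches from $z$ to $z'$ you have no bound on $B_z(t,f_{z'}(b,y))-B_z(t-1,b)$, and the chain breaks. The convex-combination argument in your second paragraph fails for the same reason: to verify~\eqref{equation:barrier-condition2} for $B=\sum_z\alpha_z B_z$ under the flow $f_{z'}$ you must bound every cross term $\alpha_z\bigl[B_z(t,f_{z'}(b,y))-B_z(t-1,b)\bigr]$ with $z\neq z'$, and the hypotheses say nothing about these. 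The alternative reading you offer --- that each $B_z$ descends under every flow --- does make your two paragraphs correct, but it trivializes the theorem: each $B_z$ would then individually satisfy all three conditions of Theorem~\ref{theorem-barrier-discrete}, the conclusion would follow from any single one of them, and the convex hull (and the advertised computational decomposition) would be beside the point.

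So the honest conclusion is that you have reduced the statement to its hard case without resolving it. Under the literal (diagonal) reading, deducing safety under arbitrary switching from mode-wise certificates is the classical multiple-Lyapunov-function problem, and it is known in general to require additional cross-coupling conditions, or else an interpretation of $\text{co}\{B_z\}$ as a min/max envelope together with a stitching argument at the switching instants (akin to the non-smooth certificates the authors mention at the end of Section~\ref{sec:barrier}). Your proposal correctly names this obstacle but leaves both branches unproved, so it does not yet constitute a proof of the theorem as stated.
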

\begin{proof}
The proof was omitted due to lack of space here. Please refer to the extended version~\cite{ahmadi2018barrier}.
\end{proof}

The efficacy of the above result is that we can search for each example-based barrier certificate $B_z$, $z \in \mathcal{Z}$, independently or in parallel and then verify whether the overall teaching algorithm (described by the learning POMDP) satisfies a pre-specified teaching performance (see Fig.~\ref{figure2} for an illustration). 

Next, we demonstrate that, if a teaching policy is given, the search for the barrier certificate can be decomposed into the search for a set of local barrier certificates. As discussed earlier, a teaching policy  $\pi:\mathcal{B} \to \mathcal{Z}$ assigns an example to different regions of the belief space (refer to Section IV). Without loss of generality, we consider policies of the form
\begin{equation}\label{eq:policy}
\pi(b) = \begin{cases} z_1, & b \in \mathcal{B}_1, \\ \vdots & \vdots \\   z_{|\mathcal{Z}|}, & b \in \mathcal{B}_N,    \end{cases}
\end{equation}
where $N$ denotes the number of partitions of $\mathcal{B}$ and $\cup_{i=1}^N \mathcal{B}_i = \mathcal{B}$. Note that the number of partitions and the number of examples are not necessarily equal. We denote by $z_i$ the example active in the partition $\mathcal{B}_i$. 

\begin{thm}\label{theorem-barrier-policygiven}
Given the learning POMDP~ $(\mathcal{H}, p_0, \mathcal{Z}, T, \mathcal{Y}, O)$, a target hypothesis $h^* \in \mathcal{H}$, a teaching performance $\lambda$, a teaching policy $\pi:\mathcal{B} \to \mathcal{Z}$ as described in~\eqref{eq:policy}, and a pre-set number of trials $t^*$, if there exists a set of function $B_i:\mathds{N} \times \mathcal{B}_i \to \mathds{R}$, $i \in \{1,2,\ldots,N\}$, such that
\begin{equation}\label{equation:barrier-condition1xx}
B_i(t^*,b_{t^*})  > 0, \quad \forall b_{t^*} \in \mathcal{B}_f \cap \mathcal{B}_i ,~~i \in \{1,2,\ldots,N\},
\end{equation}
with $\mathcal{B}_f$ as described in~\eqref{eq:unsafeset},
\begin{equation}\label{equation:barrier-condition11xx}
 B_i(0,b_0) < 0, \quad \text{for} \quad b_0=p_0,~~i \in \{1,2,\ldots,N\},
\end{equation}
and
\begin{multline}\label{equation:barrier-condition2xx}
B_i\left(t,f_{z_i}(b_{t-1},y)\right) - B_i(t-1,b_{t-1}) \le 0, \\ \forall t \in \{1,2,\ldots,t^*\},~\forall y \in \mathcal{Y}, ~\forall b \in \mathcal{B}_i,\\~i \in \{1,2,\ldots,N\},
\end{multline}
then there the teaching performance $\lambda$ is satisfied, i.e., inequality~\eqref{eq:teachperform} holds. Furthermore, the overall barrier certificate is given by $B = \text{co}\{B_i\}_{i=1}^N$.
\end{thm}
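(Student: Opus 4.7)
The plan is to mirror the contradiction argument used in the proof of Theorem~\ref{theorem-barrier-discrete}, while accounting for the state-dependent structure of the closed-loop dynamics induced by the fixed policy~\eqref{eq:policy}. First I would assume, for contradiction, that the teaching performance fails, so that there is a closed-loop trajectory $b_0, b_1, \ldots, b_{t^*}$ with $b_0 = p_0$, $b_t = f_{\pi(b_{t-1})}(b_{t-1}, y_t)$ for some observation sequence $\{y_t\}$, and $b_{t^*} \in \mathcal{B}_f$. Because the policy maps each belief to exactly one example, every $b_t$ belongs to some partition $\mathcal{B}_{i(t)}$, and the active example used from $b_t$ to $b_{t+1}$ is $z_{i(t)}$.

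The key step is then to chain local barrier inequalities. Condition~\eqref{equation:barrier-condition2xx} applied with $i = i(t-1)$ yields
\begin{equation*}
B_{i(t-1)}(t, b_t) - B_{i(t-1)}(t-1, b_{t-1}) \le 0, \qquad t = 1,\ldots, t^*.
\end{equation*}
Within any maximal sub-segment of the trajectory that remains in a single partition $\mathcal{B}_i$, these inequalities telescope into $B_i(\cdot, b_\cdot)$ being non-increasing. To convert this into a global statement, I would introduce the function $B = \text{co}\{B_i\}_{i=1}^N$ on $\mathbb{N} \times \mathcal{B}$ so that it agrees with $B_i$ on the interior of $\mathcal{B}_i$ and is reconciled at the partition boundaries in a way consistent with the local barriers. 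I would then verify the three hypotheses of Theorem~\ref{theorem-barrier-discrete} for this $B$ along the closed-loop trajectory: positivity on $\mathcal{B}_f$ comes from~\eqref{equation:barrier-condition1xx} because any $b \in \mathcal{B}_f$ lies in $\mathcal{B}_f \cap \mathcal{B}_i$ for some $i$; negativity at the initial belief comes from~\eqref{equation:barrier-condition11xx} applied with $i = i(0)$; and monotonicity comes from the telescoped local inequalities combined with the boundary reconciliation. Applying Theorem~\ref{theorem-barrier-discrete} (or contradicting it directly) then gives $B(t^*, b_{t^*}) \le B(0, b_0) < 0$, contradicting $B(t^*, b_{t^*}) > 0$.

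The main obstacle is the handling of partition transitions. Condition~\eqref{equation:barrier-condition2xx} only controls the evolution of $B_i$ under $f_{z_i}$ for initial beliefs in $\mathcal{B}_i$; it says nothing about how $B_{i(t-1)}$ and $B_{i(t)}$ compare at the switching instant when the image $b_t$ leaves $\mathcal{B}_{i(t-1)}$. The convex-hull construction $B = \text{co}\{B_i\}$ must therefore be made precise so that, at every partition boundary, the $B_i$ either coincide or dominate in the correct direction, so that monotonicity is preserved across mode switches. This is the classical multiple-Lyapunov-function difficulty in switched-systems analysis, and its careful resolution is what the omitted extended proof~\cite{ahmadi2018barrier} presumably supplies.
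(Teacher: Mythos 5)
The paper itself offers no proof of this theorem to compare against --- it defers entirely to the extended version~\cite{ahmadi2018barrier} --- so your attempt can only be judged on its own terms. On those terms, your reduction to the contradiction pattern of Theorem~\ref{theorem-barrier-discrete} is the right skeleton, and you correctly identify the crux: condition~\eqref{equation:barrier-condition2xx} only controls the evolution of $B_{i}$ under its own mode $f_{z_i}$ starting from $\mathcal{B}_i$. The problem is that this crux is the entire content of the theorem, and your proposal names it rather than closes it, so what you have is a sketch with a hole at the decisive step.

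Concretely, along a closed-loop trajectory with $b_{t-1}\in\mathcal{B}_{i(t-1)}$, the hypotheses give $B_{i(t-1)}(t,b_t)\le B_{i(t-1)}(t-1,b_{t-1})$, but the next step is governed by $B_{i(t)}$, and nothing in \eqref{equation:barrier-condition1xx}--\eqref{equation:barrier-condition2xx} relates $B_{i(t)}(t,b_t)$ to $B_{i(t-1)}(t,b_t)$. The telescoping therefore breaks at every partition switch, and the terminal contradiction has the same defect: \eqref{equation:barrier-condition1xx} gives positivity of $B_{i(t^*)}$ at $b_{t^*}$, whereas your descending chain controls a certificate with a different index. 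Asserting that $B=\text{co}\{B_i\}_{i=1}^N$ can be ``reconciled at the boundaries in a way consistent with the local barriers'' is precisely the multiple-Lyapunov-function obstruction restated, not resolved; for a generic convex-hull or pointwise-extremum combination, monotonicity across switches does \emph{not} follow from the stated hypotheses, and this is exactly where counterexamples to naive patching of local certificates live. A complete proof must either exhibit an additional compatibility condition at the switching surfaces (e.g., an ordering of $B_j$ and $B_i$ on the set of beliefs that $f_{z_i}$ maps from $\mathcal{B}_i$ into $\mathcal{B}_j$) or give a precise definition of $\text{co}\{B_i\}$ and prove it inherits all three barrier properties --- the very content you defer to the citation. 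A small additional observation: you only invoke \eqref{equation:barrier-condition11xx} for $i=i(0)$, but the theorem demands $B_i(0,p_0)<0$ for \emph{all} $i$; that stronger requirement is a signal that the intended combined certificate genuinely mixes all the $B_i$ rather than selecting the locally active one, which your per-region bookkeeping does not capture.
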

\begin{proof}
The proof was omitted due to lack of space here. Please refer to the extended version~\cite{ahmadi2018barrier}.
\end{proof}

\begin{figure}[tbp] 
\begin{center} 
\includegraphics[width=7cm]{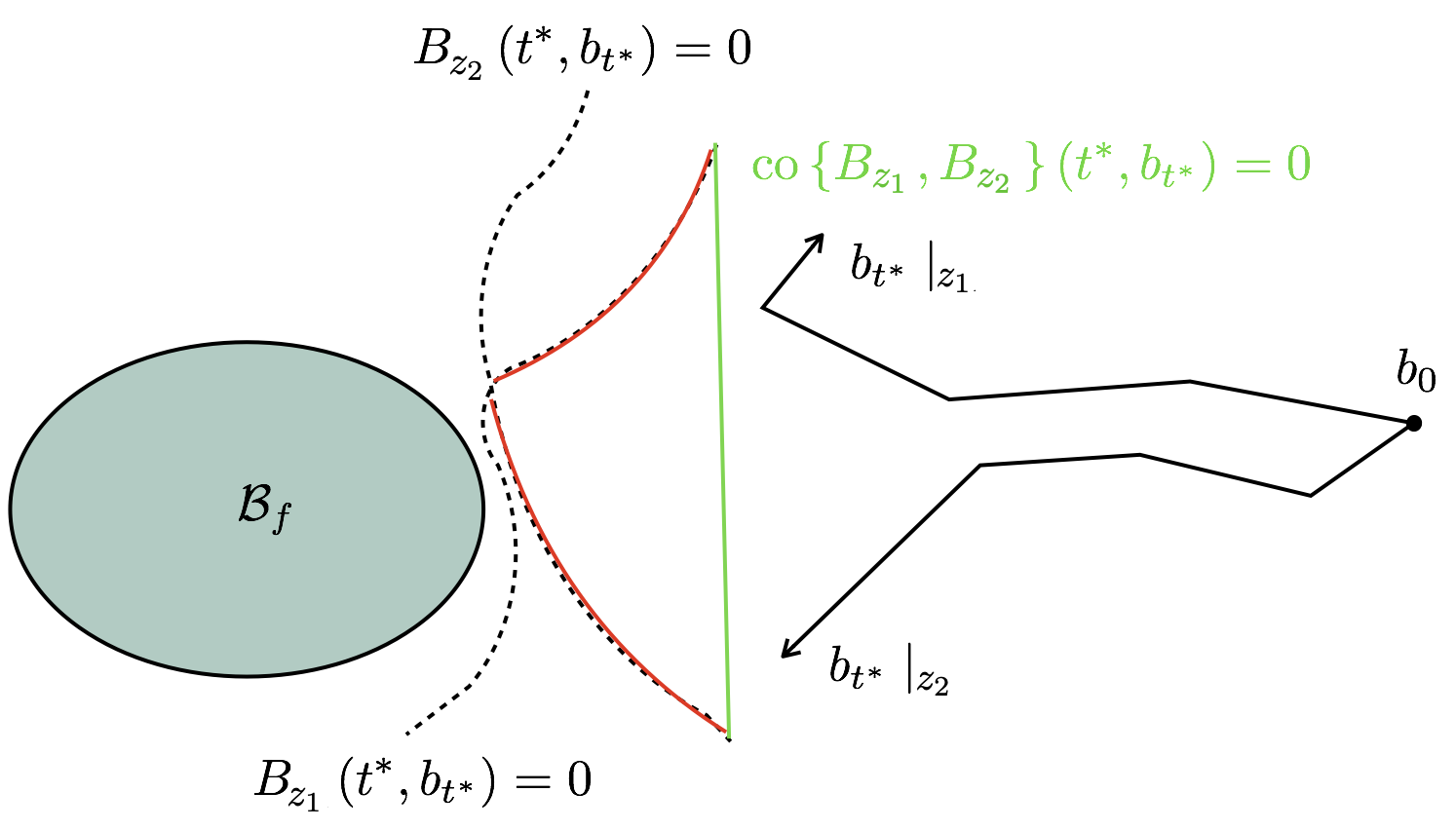}
\caption{Decomposing the barrier certificate computation for a learning POMDP with two examples $z_1$ and $z_2$: the zero-level sets of $B_{z_1}$ and $B_{z_2}$ at trial $t^*$ separate the evolutions of the hypothesis beliefs starting at $b_0$ from $\mathcal{B}_f$. The green line illustrate the zero-level set of the barrier certificate formed by taking the convex hull of $B_{z_1}$ and $B_{z_2}$.}
\label{figure2}
\end{center}
\end{figure}
%The technique used in Corollary~\ref{corollary-barrier-discrete:optimality} is analogous to the one used in~\cite{7171125,AHMADI201733} for bounding (time-averaged) functional outputs of systems described by partial differential equations. The method proposed here, however, can be used for a large class of discrete time systems and the belief update equation is a special case that is of our interest.
%
%In practice, it is often desirable to make sure a design is both optimal and safe. The problem can be described by checking whether the solutions of the belief update switched dynamics~\eqref{equation:discretesystem} enter the following set 
%$$
%\mathcal{B}_u = \mathcal{B}_u^s \cup \mathcal{B}_u^o.
%$$
%To this end, we can adopt either of the following approaches (see Figure~\ref{figure2}). 

We proposed two techniques for decomposing the construction of the barrier certificates and checking a pre-set teaching performance. Our method relied on barrier certificates that take the form of the convex hull of a set of local barrier certificates~(see similar results in~\cite{8264626,2018arXiv180104072A}). Though the convex hull barrier certificate may introduce a level of conservatism, it is computationally easier to find (as will be discussed in more detail in the next section).  We remark that another technique that can be used for  decomposition may use non-smooth barrier certificates~\cite{7937882}, i.e., max or min of a set of local barrier certificates.

%The following result can be inferred from Theorem~\ref{} in~\cite{}, which states that we can compute the barrier certificates for safety and optimality separately (or in parallel) and then form a non-smooth barrier certificate composed of the two...]\

\section{Computational Method via SOS Programming} \label{sec:SOS}

In this section, we propose techniques for finding the barrier certificates and checking whether a teaching performance is satisfied using SOS programming~\cite{Par00,PAVPSP13}. 

In order to cast the conditions of Theorem~1-3 into SOS programs, we need polynomial/rational variables and require the associated sets to be semi-algebraic. Fortunately, these requirements naturally fit our problem. The hypothesis belief space is a semi-algebraic set. Moreover, the right-hand side of the belief update equation~\eqref{eq:beliefstPOMDPdyn} is composed of rational functions in the belief states~$b_t(h)$,~$h \in \mathcal{H}$. That is, 
\begin{multline}\label{eq:belief-update-rational}
b_t(h') = \frac{S_z\left( b_{t-1}(h'),y_{t-1} \right)}{R_z\left( b_{t-1}(h'),y_{t-1}  \right)} \\
= \frac{O(h',z_{t-1},y_{t})\sum_{h\in \mathcal{H}}T(h,z_{t-1},h')b_{t-1}(h)}{\sum_{h'\in \mathcal{H}}O(h',z_{t-1},y_{t})\sum_{h\in \mathcal{H}}T(h,z_{t-1},h')b_{t-1}(h)}.
\end{multline}
Furthermore, the teaching-failure  set~\eqref{eq:unsafeset} is a semi-algebraic set. 

At this point, we  present conditions based on SOS programs to verify a given teaching performance of a teaching algorithm. 

\begin{cor}\label{cor:SOS-Safety}
Given the learning POMDP~ $(\mathcal{H}, p_0, \mathcal{Z}, T, \mathcal{Y}, O)$, a target hypothesis $h^* \in \mathcal{H}$, a teaching performance $\lambda$, and a pre-set number of trials $t^*$, if there  exist polynomial functions $B \in \mathcal{R}[t,b]$ of degree $d$ and   $p^f \in {\Sigma}[b]$, and constants $s_1,s_2>0$ such that
\begin{equation}\label{eq:setssos1}
B\left({t^*},b_{t^*}\right) +   p^f(b_{t^*}) \left( b_{t^*}(h^*) -  \lambda   \right)- s_1 \in \Sigma \left[b_{t^*}\right],
\end{equation}
\begin{equation}\label{eq:setssos2}
-B\left(0,p_0\right)  - s_2 >0,
\end{equation}
and 
\begin{multline}\label{eq:setssos3}
- {R_z\left( b_{t-1} \right)}^d\bigg(B\left(t,\frac{S_z\left( b_{t-1},y \right)}{R_z\left( b_{t-1},y \right)} \right) - B(t-1,b_{t-1})   \\  \in \Sigma[t,b_{t-1}],  \forall  t \in \{1,2,\ldots,{t^*}\},~y \in \mathcal{Y},~z \in \mathcal{Z},
\end{multline}
then there exists no solution of~\eqref{eq:beliefstPOMDPdyn} such that $b_0 =p_0$ and $b_{t^*} \in \mathcal{B}_f$ and, hence, the teaching performance is satisfied.
\end{cor}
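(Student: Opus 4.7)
The plan is to reduce Corollary~\ref{cor:SOS-Safety} to Theorem~\ref{theorem-barrier-discrete} by showing that the three SOS conditions~\eqref{eq:setssos1}--\eqref{eq:setssos3} respectively imply the three barrier conditions~\eqref{equation:barrier-condition1}, \eqref{equation:barrier-condition11}, and~\eqref{equation:barrier-condition2}. Since every element of $\Sigma[\cdot]$ is non-negative on all of $\mathds{R}^n$, each SOS membership furnishes a global polynomial inequality from which the desired pointwise inequality on the belief simplex $\mathcal{B}$ can be extracted.

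First, I would handle~\eqref{eq:setssos1}. By the definition of $\Sigma[b_{t^*}]$, for every $b_{t^*}\in\mathcal{B}$ one has
\[
B(t^*,b_{t^*}) + p^f(b_{t^*})\bigl(b_{t^*}(h^*)-\lambda\bigr) - s_1 \ge 0.
\]
Restricting to $b_{t^*}\in\mathcal{B}_f$, the factor $b_{t^*}(h^*)-\lambda$ is strictly negative by~\eqref{eq:unsafeset}, while $p^f(b_{t^*})\ge 0$ because $p^f$ is SOS. Hence $p^f(b_{t^*})\bigl(b_{t^*}(h^*)-\lambda\bigr)\le 0$, so $B(t^*,b_{t^*})\ge s_1>0$ on $\mathcal{B}_f$, which is~\eqref{equation:barrier-condition1}. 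This is the standard S-procedure/Positivstellensatz relaxation of the set-constrained inequality.

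Next, condition~\eqref{eq:setssos2} immediately yields $B(0,p_0)<-s_2<0$, which is~\eqref{equation:barrier-condition11}. For~\eqref{eq:setssos3}, I would argue that the normalization term $R_z(b_{t-1},y)=\sum_{h'}O(y\mid h',z)\sum_{h}T(h,z,h')b_{t-1}(h)$ is strictly positive on $\mathcal{B}$ for any observation $y$ with nonzero probability (otherwise the Bayes update is undefined and that branch may be dropped). Choosing $d$ to equal the degree of $B$ in $b$ clears all denominators produced by substituting $S_z/R_z$ into $B$, so the bracketed expression in~\eqref{eq:setssos3} is a genuine polynomial in $(t,b_{t-1})$. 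The SOS membership gives its global non-negativity, and dividing by the positive scalar $R_z(b_{t-1},y)^d$ recovers
\[
B\bigl(t,f_z(b_{t-1},y)\bigr) - B(t-1,b_{t-1}) \le 0
\]
for every $b_{t-1}\in\mathcal{B}$, $y\in\mathcal{Y}$, $z\in\mathcal{Z}$, $t\in\{1,\ldots,t^*\}$, which is exactly~\eqref{equation:barrier-condition2}.

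With the three hypotheses of Theorem~\ref{theorem-barrier-discrete} verified, the conclusion of that theorem delivers the claim that the teaching performance $\lambda$ is satisfied. The main subtlety I expect is the polynomial--rational bookkeeping in~\eqref{eq:setssos3}: one must ensure that multiplying by $R_z^d$ does not flip inequality signs (guaranteed since $R_z>0$ on the simplex for admissible $y$) and that $d$ is large enough to absorb every denominator arising from substituting $S_z/R_z$ into $B$. A minor caveat worth recording is that the SOS relaxation is sufficient but not necessary, since it enforces non-negativity on all of $\mathds{R}^n$ rather than only on $\mathcal{B}$; this is the usual conservatism that in turn makes the problem a tractable semidefinite program.
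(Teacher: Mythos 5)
Your proof is correct and follows exactly the route the paper intends (the paper itself omits the proof, deferring to an extended version): each SOS membership is converted to a global pointwise inequality, the multiplier $p^f$ implements the S-procedure localization to $\mathcal{B}_f$, condition~\eqref{eq:setssos2} directly gives the initial-set inequality, and multiplying by ${R_z}^d>0$ clears the rational belief update so that the three hypotheses of Theorem~\ref{theorem-barrier-discrete} hold. Your caveats --- that $d$ must absorb all denominators, that $R_z>0$ on the simplex for admissible observations, and that the relaxation is sufficient but conservative --- are exactly the right points to flag.
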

\begin{proof}
The proof was omitted due to lack of space here. Please refer to the extended version~\cite{ahmadi2018barrier}.
\end{proof}

Similarly, we can formulate SOS feasibility conditions for checking the inequalities in Theorem~2.

\begin{cor}\label{cor:SOS-Safety2}
Given the learning POMDP~ $(\mathcal{H}, p_0, \mathcal{Z}, T, \mathcal{Y}, O)$, a target hypothesis $h^* \in \mathcal{H}$, a teaching performance $\lambda$, and a pre-set number of trials $t^*$, if there  exist polynomial functions $B_z \in \mathcal{R}[t,b]$, $z \in \mathcal{Z}$, of degree $d$ and   $p^f_z \in {\Sigma}[b]$, $z \in \mathcal{Z}$, and constants $s^1_z,s^2_z>0$, $z \in \mathcal{Z}$, such that
\begin{multline}\label{eq:setssos12}
B_z\left({t^*},b_{t^*}\right) +   p^f_z(b_{t^*}) \left( b_{t^*}(h^*) -  \lambda   \right) \\
- s^1_z \in \Sigma \left[b_{t^*}\right],~~z \in \mathcal{Z},
\end{multline}
\begin{equation}\label{eq:setssos22}
-B_z\left(0,p_0\right)  - s_z^2 >0,~~z \in \mathcal{Z},
\end{equation}
and 
\begin{multline}\label{eq:setssos32}
- {R_z\left( b_{t-1} \right)}^d\bigg(B_{x}\left(t,\frac{S_z\left( b_{t-1},y \right)}{R_z\left( b_{t-1},y \right)} \right) - B_{x}(t-1,b_{t-1}) \bigg)   \\  \in \Sigma[t,b_{t-1}],  \forall  t \in \{1,2,\ldots,{t^*}\},\\~y \in \mathcal{Y},~z \in \mathcal{Z},
\end{multline}
then there exists no solution of~\eqref{eq:beliefstPOMDPdyn} such that $b_0 =p_0$ and $b_{t^*} \in \mathcal{B}_f$ and, hence, the teaching performance is satisfied.
\end{cor}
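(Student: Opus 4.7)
The plan is to show that each of the SOS conditions \eqref{eq:setssos12}--\eqref{eq:setssos32} implies the corresponding hypothesis of Theorem~\ref{theorem-barrier-convexhull}, after which the desired teaching performance follows immediately from that theorem. Because every SOS polynomial is pointwise nonnegative on $\mathds{R}^{|\mathcal{H}|}$, the three SOS feasibility requirements are standard Positivstellensatz / S-procedure relaxations of the three set-theoretic inequalities \eqref{equation:barrier-condition1x}--\eqref{equation:barrier-condition2x} that drive the hybrid-system argument.

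First I would dispose of \eqref{eq:setssos22}: writing it as $-B_z(0,p_0) > s_z^2 > 0$ gives $B_z(0,p_0) < 0$, which is exactly \eqref{equation:barrier-condition11x}. Next, for \eqref{eq:setssos12}, pointwise nonnegativity of the SOS polynomial rearranges to
\begin{equation*}
B_z(t^*, b_{t^*}) \;\ge\; s_z^1 + p^f_z(b_{t^*})\bigl(\lambda - b_{t^*}(h^*)\bigr).
\end{equation*}
On the failure set $\mathcal{B}_f$ both factors on the right are nonnegative, since $p^f_z$ is SOS and $\lambda - b_{t^*}(h^*) > 0$ by definition of $\mathcal{B}_f$ in \eqref{eq:unsafeset}. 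Hence $B_z(t^*, b_{t^*}) \ge s_z^1 > 0$ throughout $\mathcal{B}_f$, which is \eqref{equation:barrier-condition1x}.

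The delicate step is to derive the dynamical inequality \eqref{equation:barrier-condition2x} from \eqref{eq:setssos32}. Using the decomposition $f_z = S_z / R_z$ in \eqref{eq:belief-update-rational}, the difference $B_z(t, f_z(b_{t-1}, y)) - B_z(t-1, b_{t-1})$ is rational rather than polynomial and therefore cannot be certified directly by an SOS program; multiplying through by $R_z(b_{t-1})^d$, where $d$ is the degree of $B_z$ in the belief variable, clears the denominator and produces a genuine polynomial in $(t, b_{t-1})$, which is exactly what \eqref{eq:setssos32} constrains to be SOS after a sign flip. The key technical step---and the main obstacle---is to argue that $R_z(b_{t-1})^d > 0$ on the relevant portion of the belief simplex so that dividing the SOS inequality by $R_z^d$ preserves its direction. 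This is true because $R_z(b_{t-1}, y) = \sum_{h'} O(y \mid h', z)\sum_{h} T(h, z, h') b_{t-1}(h)$ is the predictive probability of observation $y$ under belief $b_{t-1}$ and is strictly positive wherever the belief update in \eqref{eq:beliefstPOMDP} is well-defined; beliefs at which $R_z$ vanishes correspond to observations of zero predictive probability and can be excluded from consideration. With the sign preserved, \eqref{equation:barrier-condition2x} follows for every $z \in \mathcal{Z}$, $y \in \mathcal{Y}$, $b_{t-1} \in \mathcal{B}$, and $t \in \{1,\ldots,t^*\}$, and Theorem~\ref{theorem-barrier-convexhull} then guarantees \eqref{eq:teachperform}. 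A fully rigorous SOS formulation would additionally incorporate SOS multipliers enforcing the simplex constraints $b(h) \ge 0$ and $\sum_h b(h) = 1$ in each of \eqref{eq:setssos12} and \eqref{eq:setssos32}; these are suppressed in the corollary statement for brevity but are introduced by a routine S-procedure augmentation without altering the argument above.
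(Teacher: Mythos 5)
Your argument is correct and follows exactly the route the paper intends: each SOS condition is the standard S-procedure/Positivstellensatz relaxation of the corresponding hypothesis of Theorem~\ref{theorem-barrier-convexhull} (pointwise nonnegativity of the SOS certificates handles \eqref{eq:setssos12} and \eqref{eq:setssos22}, and strict positivity of the denominator $R_z$ on the simplex lets you divide \eqref{eq:setssos32} by $R_z^d$ to recover \eqref{equation:barrier-condition2x}), after which the conclusion is immediate. The paper itself omits the proof and defers to an extended version, so there is nothing to compare line by line, but your reconstruction, including the correct handling of the denominator-clearing step, is the intended one.
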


We assume that a teaching policy in the form of~\eqref{eq:policy} assigns examples to semi-algebraic partitions of  the hypothesis belief space $\mathcal{B}$ described as
\begin{equation}
\mathcal{B}_i =\left\{  b \in \mathcal{B} \mid g_i(b) \le 0      \right\},~~i \in \{1,2,\ldots,N\}.
\end{equation}
We then have the following SOS formulation for Theorem~3 using Positivstellensatz.

\begin{cor}\label{cor:SOS-Safety3}
Given the learning POMDP~ $(\mathcal{H}, p_0, \mathcal{Z}, T, \mathcal{Y}, O)$, a target hypothesis $h^* \in \mathcal{H}$, a teaching performance $\lambda$, a teaching policy $\pi:\mathcal{B} \to \mathcal{Z}$ as described in~\eqref{eq:policy}, a teaching performance $\lambda$, and a pre-set number of trials $t^*$, if there  exist polynomial functions $B_i \in \mathcal{R}[t,b]$, $i \in \{1,2,\ldots,N\}$, of degree $d$, $p^{l_1}_i \in {\Sigma}[b]$, $i \in \{1,2,\ldots,N\}$, $p^{l_2}_i \in {\Sigma}[b]$, $i \in \{1,2,\ldots,N\}$, $p^{l_3}_i \in {\Sigma}[b]$, $i \in \{1,2,\ldots,N\}$, and   $p^f_i \in {\Sigma}[b]$, $i \in \{1,2,\ldots,N\}$, and constants $s^1_i,s^2_i>0$, $i \in \{1,2,\ldots,N\}$, such that
\begin{multline}\label{eq:setssos12}
B_i\left({t^*},b_{t^*}\right) +   p^f_i(b_{t^*}) \left( b_{t^*}(h^*) - \lambda   \right) +p^{l_1}_i(b_{t^*})g_i(b_{t^*}) \\
- s^1_i \in \Sigma \left[b_{t^*}\right],~~i \in \{1,2,\ldots,N\},
\end{multline}
\begin{equation}\label{eq:setssos22}
-B_i\left(0,p_0\right) +p^{l_2}_i(p_0)g_i(p_0)  - s_i^2 >0,~~i \in \{1,2,\ldots,N\},
\end{equation}
and 
\begin{multline}\label{eq:setssos32}
- {R_z\left( b_{t-1} \right)}^d\bigg(B_{i}\left(t,\frac{S_z\left( b_{t-1},y \right)}{R_z\left( b_{t-1},y \right)} \right) - B_{i}(t-1,b_{t-1}) \bigg)   \\  +p^{l_3}_i(b_{t-1})g_i(b_{t-1})  \in \Sigma[t,b_{t-1}],  \forall  t \in \{1,2,\ldots,{t^*}\},\\~y \in \mathcal{Y},~z \in \mathcal{Z},~~i \in \{1,2,\ldots,N\},
\end{multline}
then there exists no solution of~\eqref{eq:beliefstPOMDPdyn} such that $b_0 =p_0$ and $b_{t^*} \in \mathcal{B}_f$ and, hence, the teaching performance is satisfied.
\end{cor}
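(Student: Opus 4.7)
The plan is to show that the SOS feasibility conditions in Corollary~\ref{cor:SOS-Safety3} imply the three inequalities~\eqref{equation:barrier-condition1xx}--\eqref{equation:barrier-condition2xx} of Theorem~\ref{theorem-barrier-policygiven}, and then invoke that theorem. The idea is the same denominator-clearing and Positivstellensatz-multiplier technique already used for Corollaries~\ref{cor:SOS-Safety} and~\ref{cor:SOS-Safety2}, but now the partition structure $\mathcal{B}_i=\{b\in\mathcal{B}\mid g_i(b)\le 0\}$ is carried by the additional SOS multipliers $p_i^{l_1},p_i^{l_2},p_i^{l_3}$, one per region.

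First I would verify the boundary condition. Fix $i$ and pick any $b_{t^*}\in\mathcal{B}_f\cap\mathcal{B}_i$, so that $b_{t^*}(h^*)-\lambda<0$ and $g_i(b_{t^*})\le 0$. Because $p_i^f,p_i^{l_1}\in\Sigma[b]$ are non-negative, the two cross-terms in the SOS condition are non-positive. Hence the non-negativity of the SOS expression rearranges to
\[
B_i(t^*,b_{t^*})\;\ge\;s_i^1-p_i^f(b_{t^*})(b_{t^*}(h^*)-\lambda)-p_i^{l_1}(b_{t^*})g_i(b_{t^*})\;>\;0,
\]
which is exactly~\eqref{equation:barrier-condition1xx}. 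For the initial condition, evaluating the strict scalar inequality at $b_0=p_0$ gives $-B_i(0,p_0)>s_i^2-p_i^{l_2}(p_0)g_i(p_0)$; since $p_0\in\mathcal{B}_i$ for the relevant $i$ we have $g_i(p_0)\le 0$ and $p_i^{l_2}(p_0)g_i(p_0)\le 0$, so $B_i(0,p_0)<-s_i^2<0$, establishing~\eqref{equation:barrier-condition11xx}.

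The decrement inequality~\eqref{equation:barrier-condition2xx} is the most delicate step because of the rational belief update~\eqref{eq:belief-update-rational}. For $b_{t-1}\in\mathcal{B}_i$, the SOS condition is non-negative, while $p_i^{l_3}(b_{t-1})g_i(b_{t-1})\le 0$, so
\[
-R_{z_i}(b_{t-1})^{d}\,\bigl[B_i(t,f_{z_i}(b_{t-1},y))-B_i(t-1,b_{t-1})\bigr]\;\ge\;-p_i^{l_3}(b_{t-1})g_i(b_{t-1})\;\ge\;0.
\]
Because $R_{z_i}(b_{t-1})$ is the normalizer of a probability update, it is strictly positive on the reachable portion of the belief simplex, so $R_{z_i}(b_{t-1})^{d}>0$ and dividing preserves the sign, recovering~\eqref{equation:barrier-condition2xx}. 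With the three premises of Theorem~\ref{theorem-barrier-policygiven} established for each $i$, that theorem delivers the claimed teaching performance.

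The main obstacle I expect is the denominator-clearing argument in the third step. One has to be careful that $R_{z_i}(b_{t-1})^{d}$ has a definite (strictly positive) sign on $\mathcal{B}_i$, otherwise the inequality could flip when dividing. In the noise-free version-space setting, $R_{z_i}$ is a sum of non-negative observation and transition weights that is bounded away from zero on the set of reachable beliefs, so this is benign; in a more general setting it can be enforced by an extra Positivstellensatz multiplier encoding $R_{z_i}(b)\ge\varepsilon$ on $\mathcal{B}_i$, at the cost of one more SOS term in~\eqref{eq:setssos32}.
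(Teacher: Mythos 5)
Your argument is correct and is exactly the intended route: the paper itself defers the proof of this corollary to its extended version, but the corollary is introduced as ``the SOS formulation for Theorem~3 using Positivstellensatz,'' and your reduction --- using the nonnegativity of the multipliers $p^f_i, p^{l_1}_i, p^{l_2}_i, p^{l_3}_i$ on the semi-algebraic sets $\mathcal{B}_f$ and $\mathcal{B}_i=\{g_i\le 0\}$ together with clearing the denominator $R_z^d$ to recover conditions \eqref{equation:barrier-condition1xx}--\eqref{equation:barrier-condition2xx} of Theorem~\ref{theorem-barrier-policygiven} --- is precisely that argument. Your closing caveat about the sign of $R_{z_i}^d$ (benign here since the normalizer is positive on reachable beliefs, but requiring care for odd $d$ or degenerate observations) is a genuine and worthwhile observation that the paper glosses over.
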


\begin{figure*}[!t]
  \centering
  \begin{subfigure}[b]{0.88\textwidth}
    \includegraphics[trim={8pt 5pt 8pt 5pt},height=.11\textwidth]{./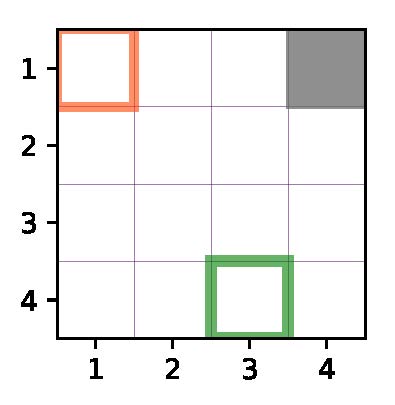}
    \includegraphics[trim={8pt 5pt 8pt 5pt},height=.11\textwidth]{./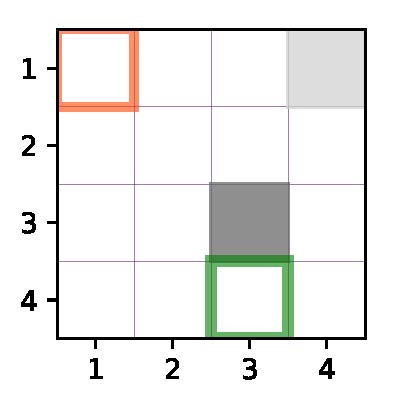}
    \includegraphics[trim={8pt 5pt 8pt 5pt},height=.11\textwidth]{./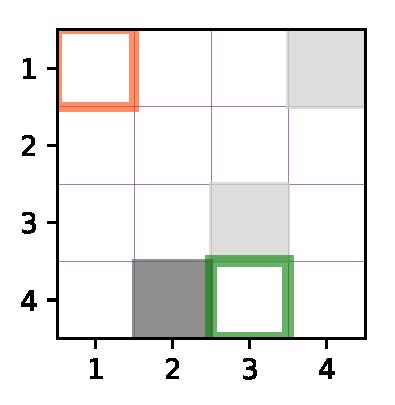}
    \includegraphics[trim={8pt 5pt 8pt 5pt},height=.11\textwidth]{./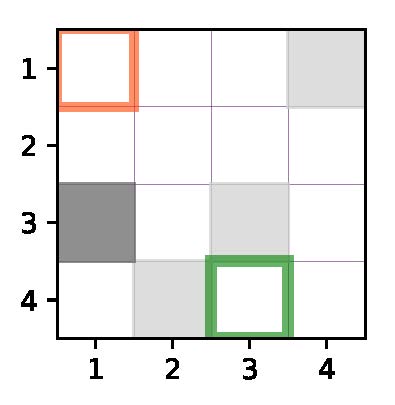}
    \includegraphics[trim={8pt 5pt 8pt 5pt},height=.11\textwidth]{./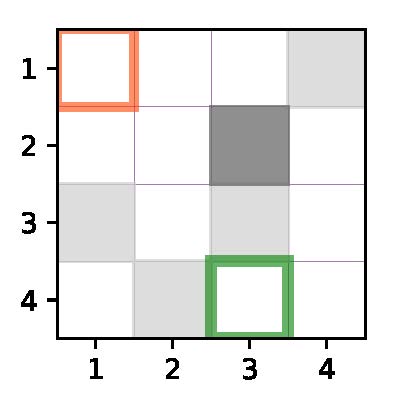}
    \includegraphics[trim={8pt 5pt 8pt 5pt},height=.11\textwidth]{./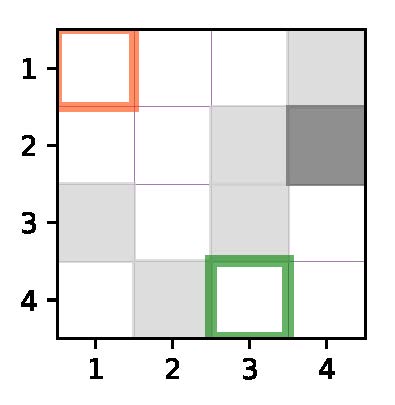}
    \includegraphics[trim={8pt 5pt 8pt 5pt},height=.11\textwidth]{./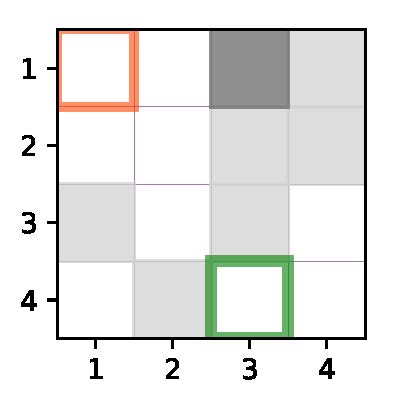}
    \includegraphics[trim={8pt 5pt 8pt 5pt},height=.11\textwidth]{./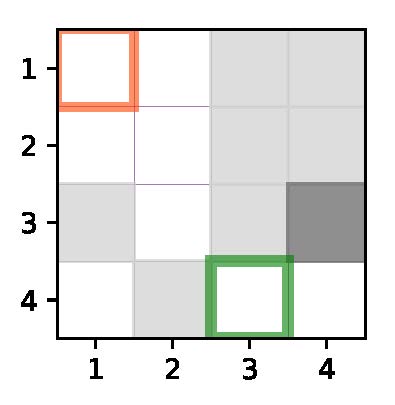}
    \includegraphics[trim={8pt 5pt 8pt 5pt},height=.11\textwidth]{./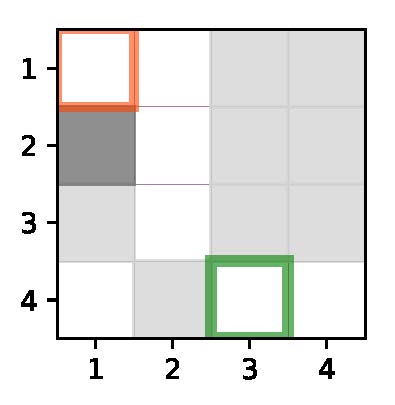}
    \caption{Myopic}
    \label{fig:app:lattice:myopic}
  \end{subfigure}
  \begin{subfigure}[b]{0.88\textwidth}
    \includegraphics[trim={8pt 5pt 8pt 5pt},height=.11\textwidth]{./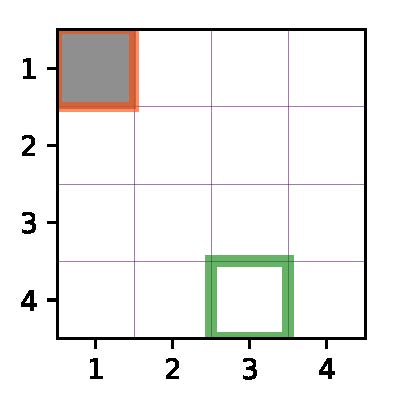}
    \includegraphics[trim={8pt 5pt 8pt 5pt},height=.11\textwidth]{./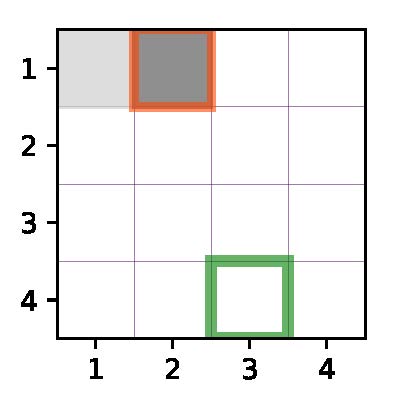}
    \includegraphics[trim={8pt 5pt 8pt 5pt},height=.11\textwidth]{./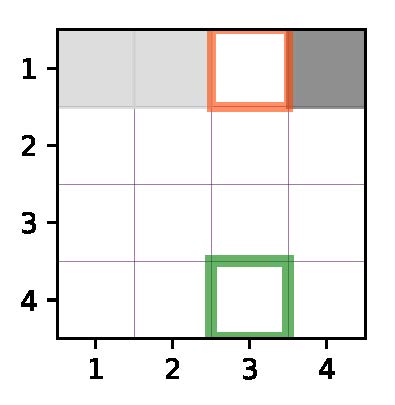}
    \includegraphics[trim={8pt 5pt 8pt 5pt},height=.11\textwidth]{./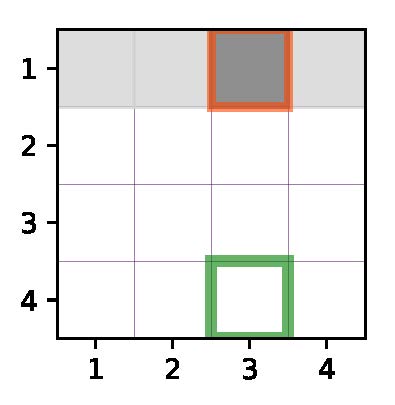}
    \includegraphics[trim={8pt 5pt 8pt 5pt},height=.11\textwidth]{./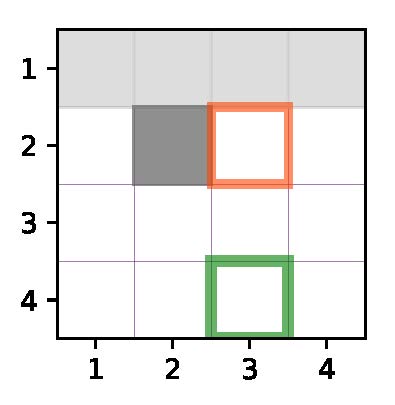}
    \includegraphics[trim={8pt 5pt 8pt 5pt},height=.11\textwidth]{./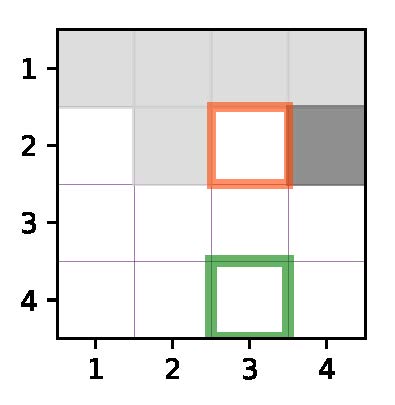}
    \includegraphics[trim={8pt 5pt 8pt 5pt},height=.11\textwidth]{./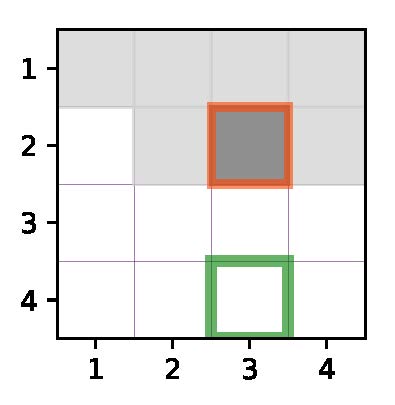}
    \includegraphics[trim={8pt 5pt 8pt 5pt},height=.11\textwidth]{./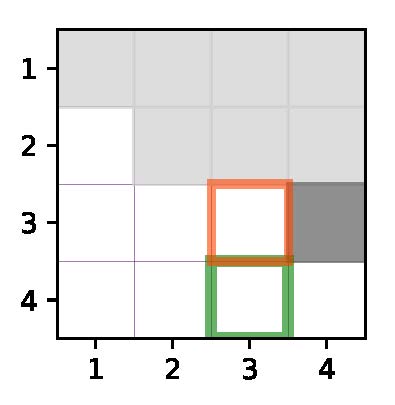}
    \includegraphics[trim={8pt 5pt 8pt 5pt},height=.11\textwidth]{./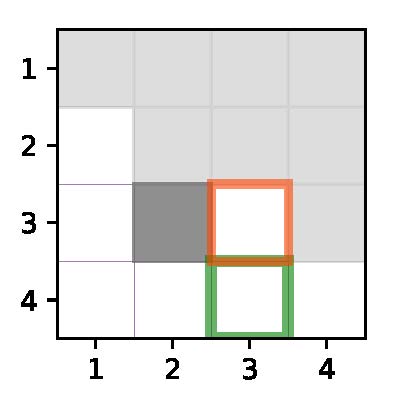}
        \caption{Ada-L}
    \label{fig:app:lattice:adal}
  \end{subfigure}
  \caption{Teaching sequences generated by  Myopic and Ada-L algorithms on a $4\times 4$ lattice, with $h_0=(1,1), h^*=(3,4)$. The learner's initial hypothesis is marked by orange, and the target is marked by green. The dark gray square represents the teaching example at the current time step, while light gray squares represent the previous teaching examples.}
  \label{fig:app:lattice_3}
\end{figure*}

\section{Example}\label{sec:example}
%\yuxin{add discussion about protocol change}
In order to illustrate the proposed framework, we consider a toy scenario, where the teacher aims to teach/steer a human
learner to reach a goal state in a physical environment. Each hypothesis/node corresponds to some
unexplored territory, and there exists an example which flags the territory as explored. The learner
prefers local moves, and if all neighboring territories are explored, the learner jumps to the next
closest one.

The physical environment is characterized by a $4\times 4$ lattice corresponding to $16$ hypotheses. The target hypothesis is located at $h^*=(4,4)$. The teacher has $16$ choices of locations on the lattice to show to the student as examples. The student then receives two labels based on its answer $y \in \{-1,1\}$.  The preference function $\sigma(h';h)$ is given by the minimum distance between hypotheses described by~$\ell_1(h';h)$.

In this example, we compare two teaching algorithms in the adaptive setting, where the teacher observes the learner's hypothesis at each iteration. The Myopic algorithm is a greedy approach which, at each iteration, picks the teaching example such that after observing the label, the worst-case rank of the target hypothesis in the learner's resulting version space is the smallest. The Ada-L algorithm aims to teach the learner some intermediate hypothesis at each iteration, i.e., it aims to direct the learner to transit to a hypothesis that is ``closer'' to the target hypothesis. For more details of the algorithms please refer to~\cite{chen18adaptive}.

Each algorithm provides a set of policies for which we seek to find the minimum number of trials such that the following teaching performance is assured
$$
b_{t^*} (h^*) \ge \lambda.
$$
To this end, we minimize the number of trials $t^*$ such that (27)-(29) are satisfied. We start by a large number of trials ($16$ in this case) and decrease it until no barrier certificate can be found to verify the teaching performance.  We fix the degree of variables $B_i$, $p^{l_1}_i$, $p^{l_2}_i$, $p^{l_3}_i$, and   $p^f_i \in {\Sigma}[b]$, $i \in \{1,2,\ldots,N\}$ in Corollary~3 to~$2$ and search for the certificates. In order to check the SOS conditions formulated in Section~\ref{sec:SOS}, we use diagonally-dominant-SOS (DSOS) relaxations of the SOS programs implemented through the SPOTless tool~\cite{spot} (for more details see~\cite{ahmadi2017dsos,8263706}).

The results on finding the minimum number of trials $t^*$ for which the teaching performance is satisfied were as follows. 
% \subsubsection{$h_0=(1,1)$ and $h^*=(4,4)$}  For the Myopic algorithm, we could not find any certificate for $\lambda=0.8$. Changing the the teaching performance to $\lambda=0.6$ yielded certificates for only $t^*=14$. In contrast, for the Ada-L algorithm, we could obtain $t^*=9$ assuring teaching performance $\lambda=0.8$ and $t^*=10$ guaranteeing teaching performance $\lambda=0.9$.

\subsubsection{$h_0=(1,1)$ and $h^*=(3,4)$}  For the Myopic algorithm, we could not find any certificate for $\lambda=0.8$. Changing the the teaching performance to $\lambda=0.55$ yielded certificates for only $t^*=15$. On the other hand, for the Ada-L algorithm, we obtained $t^*=9$ assuring teaching performance $\lambda=0.8$ and $t^*=10$ assuring teaching performance $\lambda=0.9$.

%For example, in the first column of \figref{fig:app:lattice_1}, Myopic (Algorithm 1) picks any teaching example, because the worst-case rank of the target hypothesis at $(4,4)$ is the same for any teaching example. The Ada-L algorithm, on the other hand, tries to direct the learner to make a local move from $(1,1)$ to $(1,2)$ or $(2,1)$, which are closer to the target at $(4,4)$. 
%\yuxin{}

The results can also be corroborated from simulations. As can be see in Figure~\ref{fig:app:lattice_3}, the Myopic algorithm  perform poorly on simple teaching tasks as compared to the Ada-L algorithm.

\section{CONCLUSIONS}\label{sec:conclude}

We presented a method based on barrier certificates to assure the performance of machine teaching algorithms.  Our computational method was in terms of SOS programs, where we used DSOS relaxations. It was shown in~\cite{zheng2018sparse} that using sparse SOS (SSOS) programs leads to more efficient and less conservative results. Future work can explore the use of more scalable SOS relaxations such as SSOS.

\bibliographystyle{IEEEtran}
\bibliography{reference}

\end{document}